\documentclass[twocolumn, prX]{revtex4}
\usepackage{graphicx}
\usepackage{amssymb,amsmath,amsfonts,amsthm}
\usepackage{amssymb,mathtools}
\usepackage[sans]{dsfont}
\usepackage{epstopdf}
\usepackage{subfig}
\usepackage{epsfig}
\DeclareGraphicsRule{.tif}{png}{.png}{`convert #1 `dirname
#1`/`basename #1 .tif`.png} \setlength{\textwidth}{6in}
\setlength{\oddsidemargin}{0.5\paperwidth}
\addtolength{\oddsidemargin}{-.5\textwidth}
\addtolength{\oddsidemargin}{-1in}
\setlength{\evensidemargin}{\oddsidemargin}
\setlength{\topmargin}{-.311in} \setlength{\headheight}{6.2pt}
\setlength{\textheight}{\paperheight}
\addtolength{\textheight}{-2.5in}
\newcommand{\beq}{\begin{equation}}
\newcommand{\eeq}{\end{equation}}

\newcommand{\f}{\begin{equation}}
\newcommand{\ff}{\end{equation}}

\newtheorem{theorem}{Theorem}

\newtheorem{definition}{Definition}

\setlength{\hoffset}{0pt}
\setlength{\voffset}{0pt}
\setlength{\topmargin}{-20pt}
\setlength{\headsep}{30pt}
\addtolength{\headsep}{-\headheight}
\setlength{\textheight}{9in}
\addtolength{\textheight}{-40pt}
\setlength{\footskip}{30pt}
\setlength{\oddsidemargin}{0pt}
\setlength{\textwidth}{6.5in}

\begin{document}

\title{Infinite Degeneracy of States in Quantum Gravity}
\author{Jonathan Hackett\thanks{Email address:
jhackett@perimeterinstitute.ca} and Yidun Wan\thanks{Email address:
ywan@perimeterinstitute.ca}}

\address{Perimeter Institute for Theoretical Physics,\\
31 Caroline st. N., Waterloo, Ontario N2L 2Y5, Canada, and \\
Department of Physics, University of Waterloo,\\
Waterloo, Ontario N2J 2W9, Canada\\}
\date{November 13, 2008}

\vfill
\begin{abstract}
The setting of Braided Ribbon Networks is used to present a general
result in spin-networks embedded in manifolds: the existence of an
infinite number of species of conserved quantities. Restricted to
three-valent networks the number of such conserved quantities in a
given network is shown to be determined by the number of nodes in the network. The
implication of these conserved quantities is discussed in the
context of Loop Quantum Gravity.
\end{abstract}

\maketitle
\section{Introduction}

Spin networks\cite{Penrose} have been a part of the search for
quantum gravity for over three decades. More recently they have
gained prominence in Loop Quantum Gravity as a basis for the
kinematical Hilbert space\cite{spin-foam, spin-foam2}. Though a
great deal of the understanding of spin-networks is derived directly
from graph theory, the spin-network states of quantum gravity
possess a richer structure owing to their embedding. Recently a
great deal of attention has been focused on understanding the
possibility that we may have to work with an even further enriched
structure of spin networks with width \cite{BilsonThompson:2005,
BilsonThompson:2006, Hackett2007, Wan2007, BilsonThompson:2008ex,
Wan:2008qs, LouNumber, Hackett:2008tt, LeeWan2007, He:2008is,
He:2008jc}. This work was motivated from the suggestion that in
studying loop quantum gravity with a positive cosmological constant
we may need to consider framed spin networks \cite{Major:1995yz,
Borissov:1995cn, Smolin:2002sz}.

One of the major revelations of the work on framed spin networks has
been the existence of conserved topological structures. This was
joined in \cite{Markopoulou:2008be} with the existence of conserved
quantities even in the absence of an embedding.

We shall present an overview of embedded three-valent framed spin
networks (which we will call three-valent Braided Ribbon Networks),
we will then describe the existence of conserved topological
structures in these networks and demonstrate the existence of
countably infinite species of these structures. We will then finally
lift the result to embedded three-valent spin networks without
framing. The results on the classification of the structures will be
general in the valence of the spin-network though the results in the
invariance of the number of conserved structures will only apply to
the three-valent case.

\section{Braided Ribbon Networks}
Braided ribbon networks (BRNs) generalize spin networks by extending
the edges into a higher dimensional structure and associating
non-isotopic embeddings of the same spin network to different basis
states in a quantum space of a theory of quantum gravity. Trivalent
BRNs are constructed through taking the union of trinions -
2-surfaces of the form of Fig. \ref{trin1}. By considering a
trinion to be the combination of three legs and a node (the union of
the three legs), we can allow the legs of the trinion to be twisted
about and to cross over parts of other trinions before being joined
together into a BRN. This gives us generalized structures with
features such as those in Fig. \ref{trin2}. We will use the word ribbon to refer to the extended edges of the BRN, and use the term edges to refer instead to the boundaries of the ribbons.  The BRN is considered
to evolve subject to the standard evolution algebra of trivalent
spin networks $\mathcal{A}_{evol}$ whose generators consist of the
$1-3$ move and the $2-2$ move (Fig. \ref{moves}).  Whether these correspond to evolution with respect to time, or some other concept is irrelevant to this work, and so will be left to the reader's taste.   In the present work we shall use what is referred to as the Smolin-Wan rule in \cite{Markopoulou:2008be} which means that a $2-2$ move cannot be performed when the shared ribbon (labeled $b$ in the exchange move of Fig. \ref{moves}) has additional content from the embedding (for example a knot, or a twist).  We shall also
consider the simpler case of embedded spin networks, an analogous
situation where we do not extend the graph edges, but still consider
non-isotopic embeddings of the same abstract spin network to be
distinct (these admit similar structures but without twisting).  In discussing embedded spin networks, we will use the word edge in the sense of graph theory - the meaning of the word edge should be clear from its context.

Additionally before we proceed we will clarify what we mean by a `knot' in an embedded spin network:
\begin{definition}
We will call a \textbf{local knot} any locally knotted arc that we can place a closed compact surface around with the network only intersecting the surface precisely twice at the single edge making up the locally knotted arc.
\end{definition}

\begin{figure}[!h]
  \begin{center}
  \subfloat[]{\label{trin1}\includegraphics[scale=0.2]{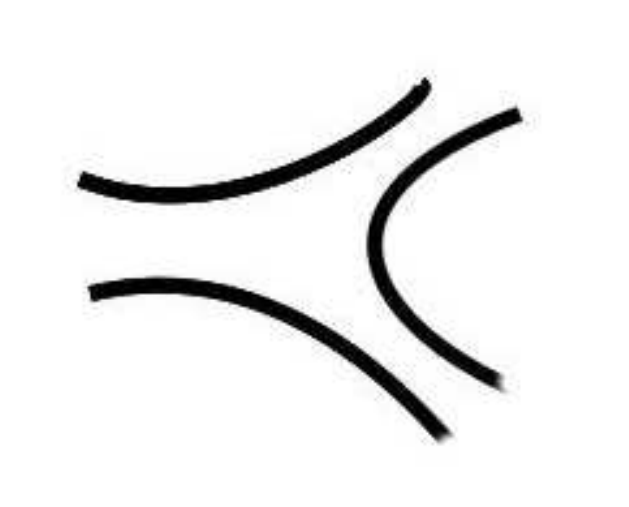}}
  \subfloat[]{\label{trin2}\includegraphics[scale=0.2]{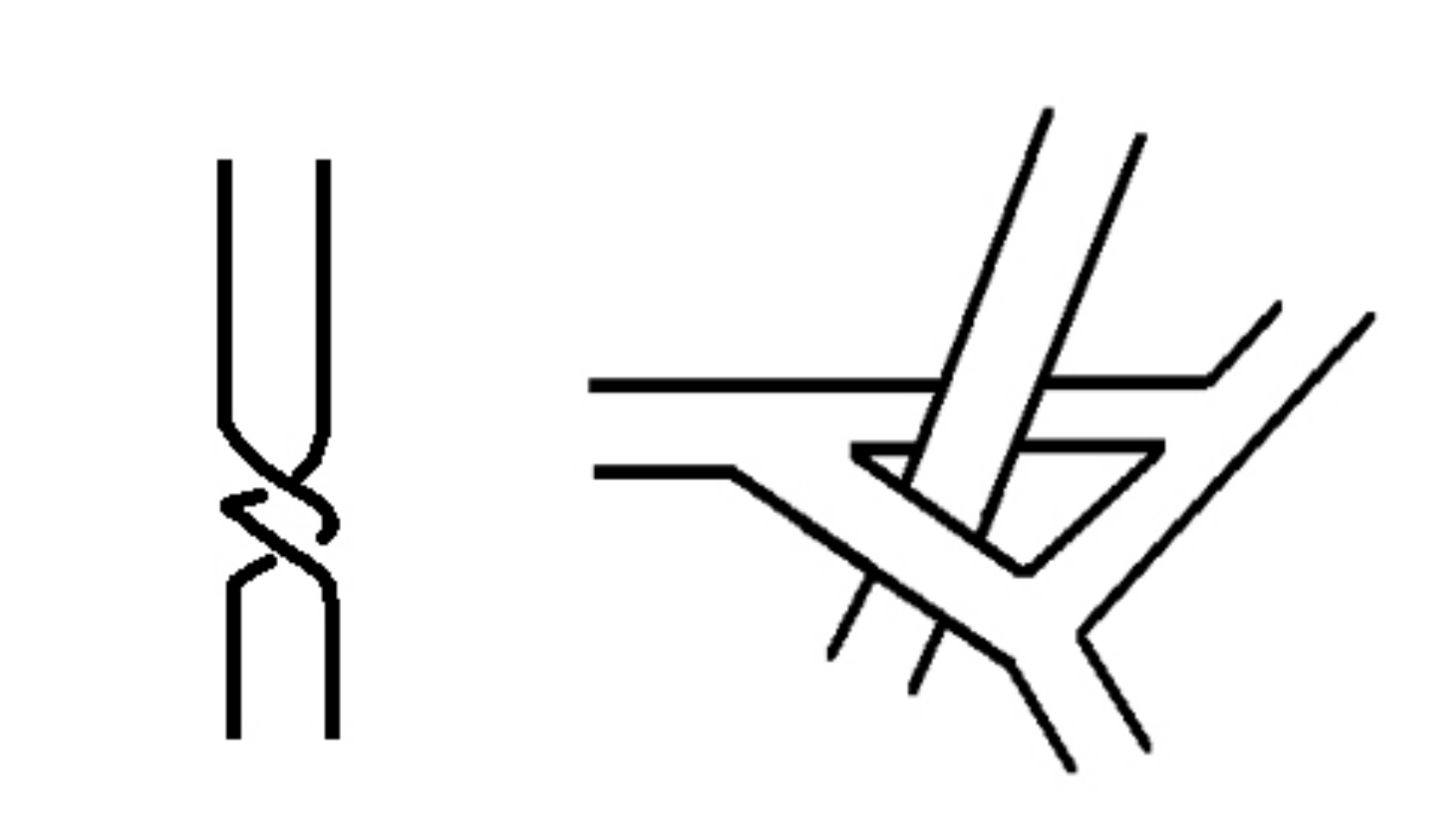}}
\end{center}
  \caption{Trinions, twists and braiding}
  \label{trinintro}
\end{figure}

\begin{figure}
  \begin{center}
    \includegraphics[scale=0.6]{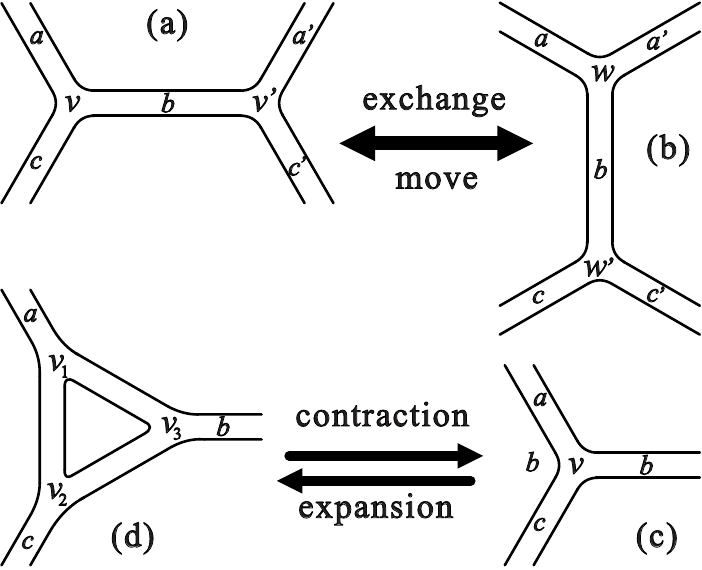}
  \end{center}
\caption{Generators of $\mathcal{A}_{evol}$}
\label{moves}
\end{figure}

In \cite{BilsonThompson:2006} the reduced link was introduced as a
tool for understanding the non-trivial topological content of a
braided network. The reduced link consists of removing the interior
of the 2-surface - leaving only the edges of the ribbons - and then removing all
un-linked un-knotted loops. An example of taking the reduced link is
shown in Fig. \ref{reducedlink} to demonstrate the process. It has
been shown that the reduced link is an invariant of the generators
of $\mathcal{A}_{evol}$ and therefore could be used to demonstrate
invariants of the BRN.

\begin{figure}
  \begin{center}
    \includegraphics[scale=0.3]{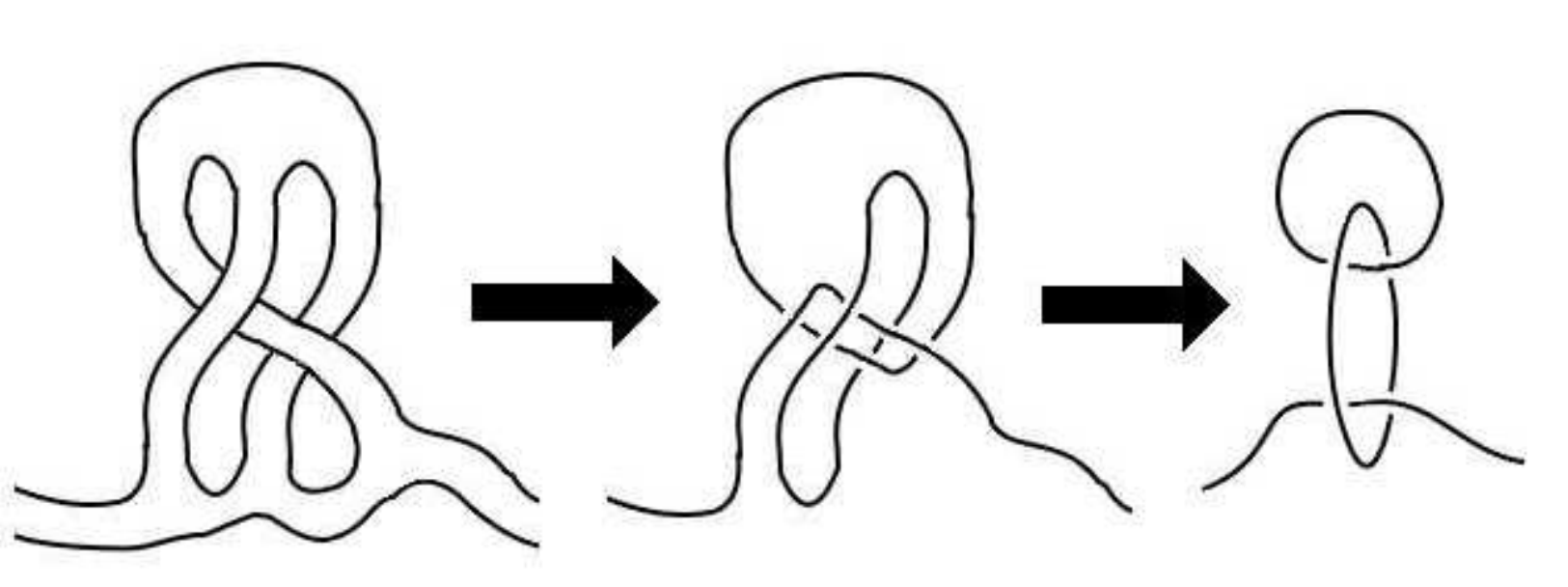}
  \end{center}
\caption{The Reduced Link}
\label{reducedlink}
\end{figure}

\section{Conserved Structures in Braided Ribbon Networks}

Given the existence of conserved structures in braided ribbon
networks we shall consider specifically the more local structures
(in the sense of micro-locality from \cite{Hackett2007}). This
specifically shall ignore structures where ribbons are knotted or
braided with ribbons that are distant under the standard distance
function of graph theory. We can consider the local structures in
order of reducing locality: the most local are those that involve
only a single ribbon (ultra local), then those that involve ribbons
sharing a node (1st degree local), then those that involve the ribbons
of two adjacent nodes (2nd degree local), and so on.  In particular,
we know from \cite{Markopoulou:2008be} that a structure such as that
in Fig. \ref{kedge} is conserved as we are unable to perform the $2-2$ move, and is
therefore an ultra local conserved structure.  The only exception to this is if the ribbon we are concerned with connects two 'halves' of the network (i.e. that the ribbon divides the network between two parts that are not connected to one another through anything other than that single ribbon - in the language of the next section, it corresponds to both $b$ and $b^\prime$ being tethers of isolated substructures).  In this situation we are able to remove the knotting and twisting by isotopy.  As this case is artificial in nature and uninteresting, we shall ignore it for our investigation.

\subsection{Isolating the conserved structures}

In \cite{Hackett2007} the concept of an isolated substructure was
introduced as a means of understanding the ability to translate
features through a braided ribbon graph. An isolated substructure is
a subset of a graph which connects to the rest of the graph only
through a single ribbon (called its tether) and is not part of any
larger topological features.  If a substructure can be evolved into an isolated substructure by a sequence of applications of elements of the evolution algebra, we will call them isolatable. Isolatable substructures are essentially propagating locally conserved quantities\cite{Hackett2007} able to move via the evolution algebra to any point edge-connected to its tether, and having conserved structure inside of it. We can also see via the form of the reduced link of an isolated substructure - and the invariance of the reduced link under the evolution moves - that an isolatable substructure corresponds to a 'piece' of the reduced link that is essentially cut and paste into the link of the edge its tether is on (see for example fig.\ref{reducedlink}).  That a general reduced link can be considered a direct product of these `pieces' means that a structure being isolatable does not require the evolution to acquire its meaning as a part of an invariant of the network, and that this meaning is invariant under interpretation of the meaning of the evolution.   We shall demonstrate that a specific class of ultra-local conserved quantities are isolatable.  To do this we shall first introduce several definitions.

\begin{figure}[h]
  \begin{center}
    \includegraphics[scale=0.6]{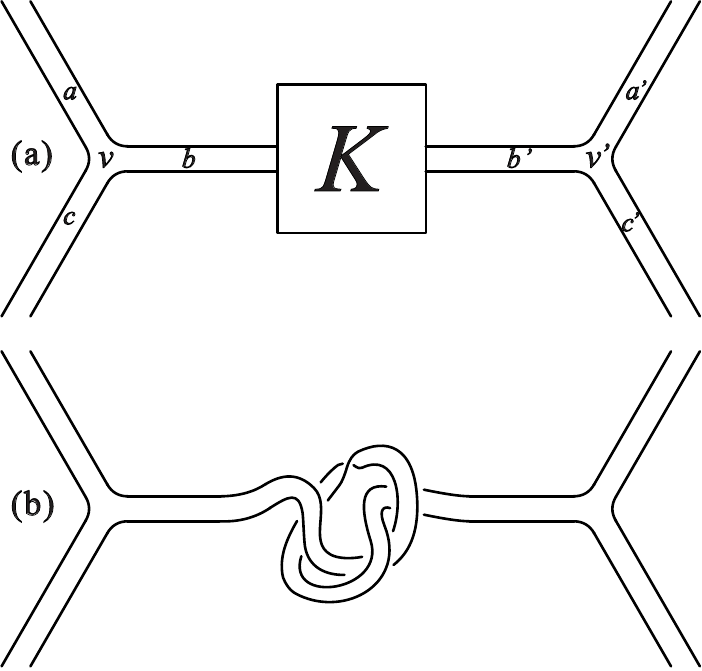}
  \end{center}
\caption{An example of an ultra-local structure} \label{kedge}
\end{figure}

\begin{definition}
Two edge segments are said to be \textbf{edge connected} if they are
connected in the space consisting of the edge of the network.
Equivalently two edge segments are said to be edge connected if they
are part of the same link in the reduced link of the network. The
path in the space of edges between the two edge segments is called
the \textbf{edge path}.
\end{definition}

\begin{definition}
Two ribbons or two nodes in a network have a \textbf{path} between
them if there exists a sequence of ribbons and nodes that can be
traversed between them. The sequence of ribbons and nodes taken is
called the path.
\end{definition}

\begin{definition}
A \textbf{free path} is a path which does not have any twists, knots
or links along it.  Specifically, each ribbon connecting the nodes of the path does not have any knotting or twisting on it, and there is no ribbon that crosses a ribbon in the path in such a way that cannot be undone by the Reidemeister moves applied to the ribbons.
\end{definition}

\begin{definition}
A \textbf{free edge path} is an edge path which does not have any
knots or links along it.  This corresponds similarly to requiring that each ribbon the edges of the path belong to does not have any knotting or twisting on it, and there is no ribbon that crosses one of these ribbons in such a way that cannot be undone by the Reidemeister moves applied to the ribbons.
\end{definition}

We shall now prove that a general class of ultra-local structures
(see Fig. \ref{kedge}) can be made isolated.

\begin{theorem}
An ultra-local structure in a BRN which possesses a free edge path
between one of the edges of each of the external ribbons can be isolated.
\end{theorem}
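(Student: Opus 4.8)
The plan is to show that the free edge path gives us exactly the "room" needed to slide the ultra-local structure, together with its external ribbons, down to a configuration where it connects to the remainder of the network through a single ribbon — i.e.\ an isolated substructure. The key observation is that an ultra-local structure (as in Fig.~\ref{kedge}) sits on a single ribbon with two external ribbons $b$ and $b'$ attaching it to the rest of the network, and that the obstruction to isolating it is the Smolin-Wan rule: we cannot perform a $2-2$ move across a ribbon that carries extra embedding content. So the whole argument reduces to producing a sequence of moves that transports the structure along the free edge path without ever requiring a forbidden $2-2$ move.

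First I would set up the correspondence between the free edge path and a free path of ribbons. Since the edge path runs along edges that belong to ribbons with no knotting or twisting, and across which any crossings are Reidemeister-trivial, the ribbons underlying the edge path form (after the allowed Reidemeister moves) a free path in the sense of the third definition. This translation step is where I would invoke the definitions of \textbf{free edge path} and \textbf{free path} directly, arguing that freeness of the edge path upgrades to freeness of the ribbon path because the two notions differ only in whether one tracks boundary edges or the ribbons themselves, and both definitions impose the same ``no knots, no twists, no essential crossings'' condition. Second, I would show that along a free path the relevant $2-2$ and $1-3$ moves are always admissible: because each intermediate ribbon is clean (no knot or twist), the shared ribbon in each exchange move never carries embedding content, so the Smolin-Wan restriction is never triggered. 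This lets me perform a ``translation'' of the ultra-local structure one node at a time along the path, exactly as the translation-of-features mechanism of \cite{Hackett2007}.

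Third, I would iterate: each elementary translation moves the structure past one node of the free path while preserving its ultra-local character (the conserved content lives on its own ribbon and is never the shared ribbon $b$ of a move). After finitely many steps the structure arrives at the end of the free edge path, at which point its two external tethers can be brought together so that it connects to the rest of the graph through a single ribbon — this is precisely the definition of an isolated substructure. I would also use the reduced-link picture as a consistency check: since the reduced link is invariant under $\A_{evol}$ and the ultra-local structure corresponds to a definite ``piece'' of the link, the freeness of the edge path guarantees that this piece is linked with nothing along the path and can therefore be cut and pasted onto the tether edge, confirming isolability independently of the move-by-move argument.

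The main obstacle I expect is the second step: verifying that \emph{no} move in the transport sequence ever demands a forbidden $2-2$ move. One must rule out the possibility that, while sliding the structure, the external ribbon being moved becomes the shared ribbon of an exchange across a segment that secretly carries content — for instance, a crossing along the path that looked Reidemeister-removable but interacts with the structure's own crossings. Handling this cleanly requires using the freeness hypothesis in full force, namely that every crossing along the path is undoable by Reidemeister moves applied to the path ribbons, so that those crossings can be cleared \emph{before} the structure is transported across them, leaving a genuinely free corridor. I would also need to treat carefully the degenerate boundary case flagged in the text, where the ribbon in question divides the network into two otherwise-disconnected halves; but that case is explicitly excluded from consideration, so it poses no difficulty here.
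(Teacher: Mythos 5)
Your proposal is correct and takes essentially the same route as the paper: the paper's proof is an induction on the number of intervening nodes along the free edge path, applying the exchange ($2-2$) move one node at a time --- exactly your node-by-node transport, with the freeness of the path guaranteeing the Smolin-Wan rule never blocks a move. Your translation step from free edge path to free ribbon path and the reduced-link consistency check are elaborations the paper leaves implicit, but the core argument is identical.
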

\begin{figure}
  \begin{center}
    \includegraphics[scale=0.4]{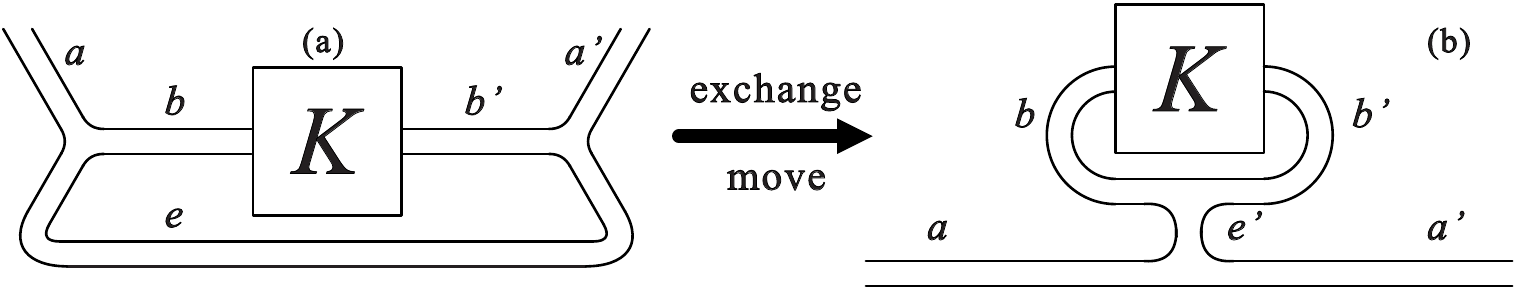}
  \end{center}
\caption{0 Node Case} \label{Theo1proofA}
\end{figure}

\begin{figure}
  \begin{center}
    \includegraphics[scale=0.4]{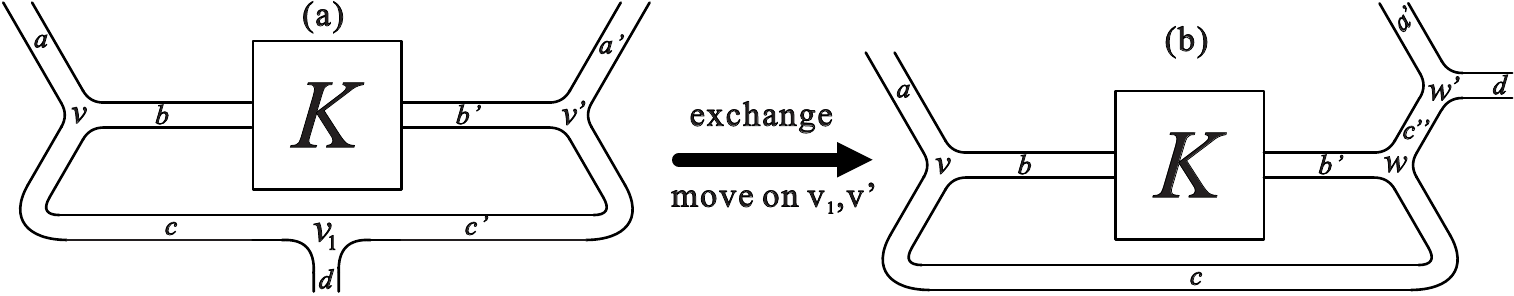}
  \end{center}
\caption{1 Node Case}
\label{Theo1proofB}
\end{figure}

\begin{proof}
We shall prove this using induction on the number of intervening
nodes. First we shall prove for one intervening nodes, then assume
true for $n-1$ nodes and prove true for $n$ nodes. Consider the
situation depicted in Fig. \ref{Theo1proofB} (where the apparent
orientation of the nodes is for simplicity, and is in fact general),
the application of the exchange move between nodes $v_1$ and
$v^\prime$ reduces the situation to that of Fig. \ref{Theo1proofA}
which can then be isolated by using the exchange move on the two
nodes involved. Now, we examine the situation in Fig.
\ref{Theo1proofC} to demonstrate that the $n$ node situation can be
reduced to $n-1$ nodes by applying the exchange move on $v^\prime$
and $v_n$. We can then use the assumption of truth on the $n-1$ case
to isolate the knot.
\end{proof}

\begin{figure}
  \begin{center}
    \includegraphics[scale=0.5]{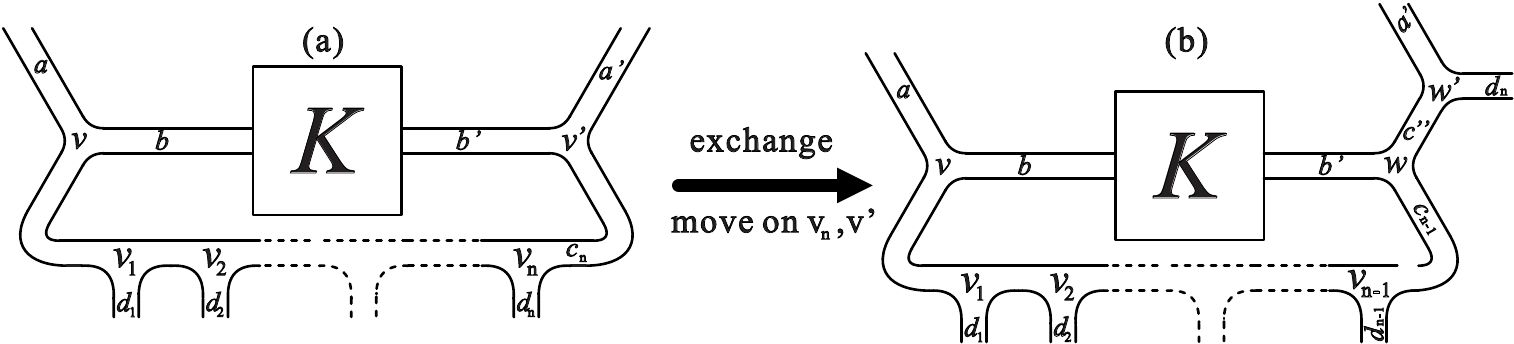}
  \end{center}
\caption{$n$ Node Case}
\label{Theo1proofC}
\end{figure}

This result lets us examine a peculiar situation: that where a knot
on a ribbon is \textit{not} a conserved quantity. Examining Fig.
\ref{except1} (where the unattached ribbons connect to a larger
network) we can see that it is possible in certain situations to
reduce the number of ultra local structures, in the sense that two
knots, e.g. the $K_1$ and $K_2$ in the figure, merge with each
other.  From this we see that we can always obtain a graph with only conserved local structures remaining.
\begin{figure}
  \begin{center}
  \includegraphics[scale=0.6]{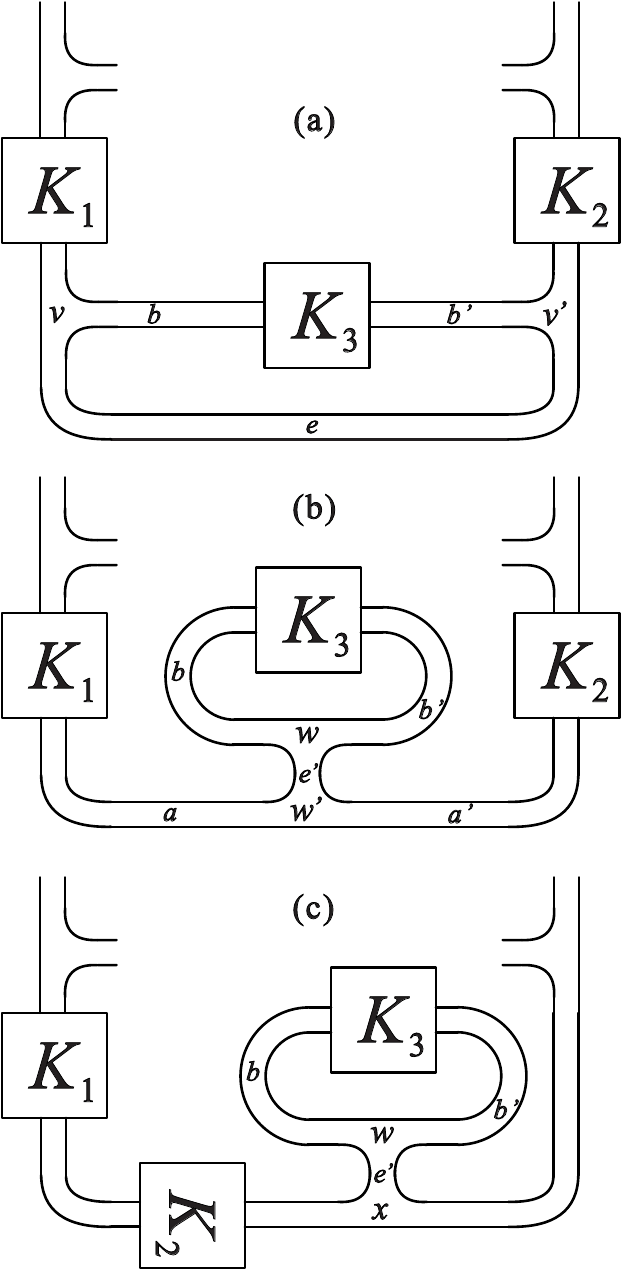}
  \end{center}
\caption{Reduction of the number of ultra local structures}
\label{except1}
\end{figure}

\begin{theorem}
A knot in an embedded graph which possesses a free path between its
two external edges can be isolated.
\end{theorem}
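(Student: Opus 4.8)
The plan is to \emph{lift} the braided-ribbon result of Theorem~1 rather than to rebuild the argument from scratch. An embedded spin network carries no framing data, so I would first realize it inside the richer BRN setting by thickening: replace each graph edge by a ribbon, using the trivial (blackboard) framing so that no edge acquires a twist. Under this thickening the local knot on a single edge becomes exactly an ultra-local structure of the type in Fig.~\ref{kedge} (a knotted but untwisted ribbon), and the knot's two external edges become the external ribbons of that structure.

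The core of the lift is the dictionary between a \emph{free path} in the embedded graph and a \emph{free edge path} in the thickened BRN. Along the given free path there are, by hypothesis, no knots and no essential links, and any crossing of the path by another edge can be removed by Reidemeister moves. Because I thicken with the trivial framing, the ribbons lying along the path carry no twists either; hence one boundary component of each such ribbon furnishes a free edge path joining one edge of each external ribbon, which is precisely the hypothesis required by Theorem~1. I would verify that the absence of knots, twists and essential crossings is inherited by the ribbon boundaries, so that the two notions genuinely correspond.

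With the hypotheses of Theorem~1 in hand, that theorem isolates the ultra-local structure in the thickened BRN by a finite sequence of $1-3$ and $2-2$ moves, namely the induction on intervening nodes in its proof that reduces the $n$-node configuration to the $n-1$-node one by an exchange move (Figs.~\ref{Theo1proofC} and~\ref{Theo1proofB}). Since the $1-3$ and $2-2$ moves are the common generators of $\mathcal{A}_{evol}$ for both BRNs and embedded spin networks, and since the shared edges along the free path are free (so the Smolin--Wan restriction on the exchange move is never triggered), this sequence descends to a legal sequence of moves in the embedded spin network. Forgetting the framing at the end sends the isolated tether ribbon to a single tether edge, so the knot is isolated in the embedded graph, as claimed.

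The step I expect to be the main obstacle is the correspondence itself, not the evolution: I must ensure that a free path, a condition phrased about the edges of the path, really does thicken to a free edge path, a condition phrased about the \emph{boundary} edges of ribbons, and that the framing can be chosen globally and consistently so that no spurious twist is introduced along the path while the knot is faithfully reproduced. Once this dictionary is secured the theorem is immediate; as an independent check one may instead run the same induction directly in the embedded setting, where the exchange move acts identically on graph edges and the base cases are the $0$- and $1$-node configurations of Figs.~\ref{Theo1proofA} and~\ref{Theo1proofB}.
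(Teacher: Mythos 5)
Your proposal is correct in substance, but it runs in the opposite direction from the paper. The paper's own proof is a one-line observation: the induction of Theorem 1 carries over verbatim to embedded graphs and is in fact \emph{easier} there, because in the absence of framing one can rotate an edge freely without introducing a twist --- all the twist bookkeeping that makes the BRN case delicate simply disappears, which is why a free path (rather than a free \emph{edge} path on ribbon boundaries) suffices as the hypothesis. You instead lift the embedded graph into the BRN category by thickening, invoke Theorem 1 as a black box, and project back by forgetting the framing. That is a legitimate alternative, but the cost is exactly the dictionary you flag as the main obstacle, and there is one imprecision in it worth fixing: the \emph{blackboard} framing of a free path is not automatically twist-free in the ribbon sense, since a Reidemeister-I kink in the chosen projection becomes a curl of the ribbon, and a curled flat ribbon is isotopic to a straight ribbon carrying a twist. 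What you actually need is the zero-twist framing on each edge of the path, which exists because the twists of a thickening of a graph can be assigned independently edge by edge (any needed compensating twist can be pushed onto edges off the path); with that choice your dictionary --- free path thickens to free edge path, the Smolin--Wan restriction on the exchange move is never triggered along the path, and the sequence of $1-3$ and $2-2$ moves descends under forgetting the framing --- does go through. Your closing remark, that one could instead run the induction directly in the embedded setting, is precisely the paper's proof, and it is the shorter route because it avoids the framing dictionary altogether rather than having to certify it.
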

\begin{proof}
The proof of this follows inherently from the above proof and the
fact that in an embedded graph - instead of a BRN - one can rotate
an edge without introducing a twist.
\end{proof}

We can apply the above theorems to reduce less local structures to
more local situations. Consider for example the situation in Fig.
\ref{lEdge}, we can apply the results of the above theorems to
transform it to Fig. \ref{lEdge2to2} if ribbons $a$ and $b$, and
$a^\prime$ and $b^\prime$ are connected by a free edge path.

\begin{figure}[h]
  \begin{center}
  \subfloat[]{\label{lEdge}\includegraphics[scale=0.6]{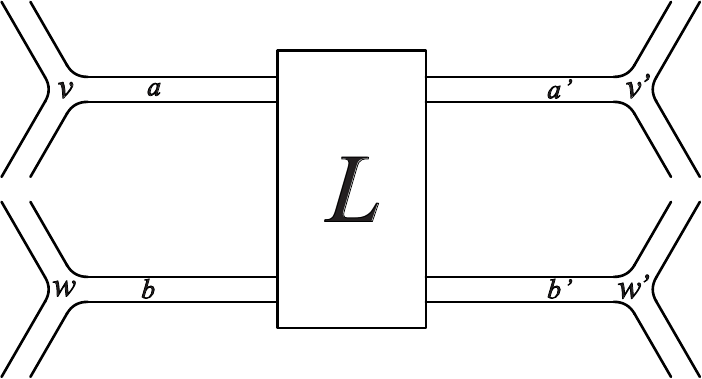}}\ \ \
  \ \
  \subfloat[]{\label{lEdge2to2}\includegraphics[scale=0.6]{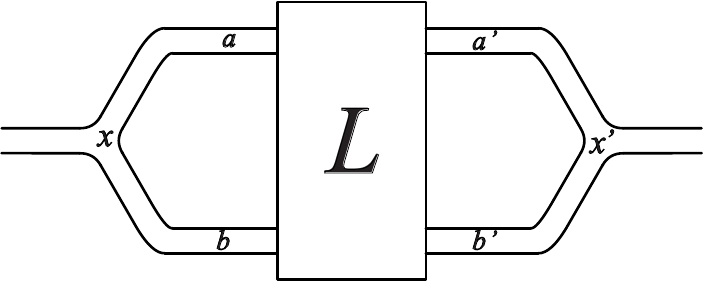}}
  \end{center}
\caption{Less Local knottings}
\label{lEdges}
\end{figure}

We can also see from these results that if there is no (edge) free path between the two edges of any of the knots in a network, there is no means to combine the knots onto single edges.  This leads us to conclude that if one isolates all isolatable knots on a graph the remaining knots are completely invariant.

\subsection{Immediate Results of Ultra-Local Structures}

Considering the idea of ultra local structures, we find that there
is only a single type of conserved structure. A structure formed by
the topological deformation of a single ribbon can only possess two
features: knots and twists. As twists can be passed through the
knotting of a ribbon by an isotopy, we can consider any ultra local
structure to be exactly characterized by a half integer
(corresponding to the number of rotations) and a knot or a connected
sum of knots.

This leads us to our first results:

\begin{quote}
\textbf{\textit{The existence of countably infinite many species of local conserved structures}} \\
There exist infinitely many species of local conserved structures. \\
\\
Any edge can be replaced by an edge with an isolated edge with some half-integer twist and a knot or a connected sum of knots. As there are infinitely many
half-integers and knots, there are therefore infinitely many such
species of structures.
\end{quote}

\begin{quote}
\textbf{\textit{Maximal number of local conserved quanities}} \\
For a closed 3-valent BRN with $N$ nodes, the maximum number of
ultra-local conserved quantities is $\frac{3N}{2}$.
\end{quote}

We can immediately lift these results to the scenario of un-framed
spin networks: excepting the twists, all the results follow
immediately. Additionally the first result does not depend on the
valence of the spin-network involved in any way and is therefore a
general result for embedded spin-networks.

\section{Conclusions and Discussion}

We have demonstrated the existence of a countable infinity of
species of local conserved structures within Braided Ribbon Networks
(and embedded spin-networks in general). We have also provided
several results of use for isolating these structures and
understanding when they are actually preserved.

In a theory of Quantum Gravity where the states are given by
spin-networks embedded in a 3-manifold all of these states will be
part of the Hilbert space.  The difficulty this poses comes from the fact that these locally conserved structures correspond to an infinite number of conserved quantities that don't correspond with anything that commutes with the constraints of general relativity.  This poses a significant problem in any attempt to recover the classical limit from a generic embedded spin-network - there is no reason to believe that these conserved quantities will simply cease to exist in the classical limit.  This leaves a significant dilemma: we must change something in the theory for general relativity to be the classical limit.

There are two immediately obvious alternatives for resolving this, the first being to modify the hamiltonian constraint in such a way that we introduce new generators or the evolution algebra.  The alternative to this is that we should reduce the physical Hilbert space of a theory of quantum gravity to require that there
do not exist any knots or linking. Our ability to consider this
super selection rule and still do certain things (including considering embedded spin networks) is questionable and
requires investigation before this can be adopted as an `easy'
solution.


\section*{Acknowledgements}

The authors are indebted to their Advisor, Lee Smolin, for his
discussion and critical comments. We thank Laurent Freidel, Sudance
Bilson-Thompson, Louis Kauffman and Isabeau Premont-Schwarz for helpful discussions.
We are further indebted to Sundance Bilson-Thompson for the use of
Fig. \ref{reducedlink} with permission.  Similarly Fig. \ref{trin1} is taken from \cite{Hackett2007} with permission. Research at Perimeter Institute for
Theoretical Physics is supported in part by the Government of Canada
through NSERC and by the Province of Ontario through MRI.

\end{document}